\newtheorem{thm}{Theorem}[section]
\newtheorem{proposition}[thm]{Proposition}
\newtheorem{corollary}[thm]{Corollary}
\newtheorem{definition}[thm]{Definition}
\newtheorem{remark}[thm]{Remark}
\newtheorem{theorem}[thm]{Theorem}
 \newcommand{\al}{\alpha } 	
  \newcommand{\be}{\beta}
\title{A discrete Darboux--Lax scheme for  integrable difference equations}
\date{}
\author[1]{X. Fisenko\thanks{xenia.fisenko9@yandex.ru}}
\author[2]{S. Konstantinou-Rizos\thanks{skonstantin84@gmail.com}}
\author[3]{P. Xenitidis\thanks{xenitip@hope.ac.uk}}
\affil[1,2]{Centre of Integrable Systems, P.G. Demidov Yaroslavl State University, Yaroslavl, Russia}
\affil[3]{School of Mathematics, Computer Science and Engineering, Liverpool Hope University, L19 9JD Liverpool, UK}
\begin{document}

\maketitle

\begin{abstract}
We propose a discrete Darboux--Lax scheme for deriving auto-B\"acklund transformations and constructing solutions to quad-graph equations that do not necessarily possess the 3D consistency property. As an illustrative example we use the Adler--Yamilov type system which is related to the nonlinear Schr\"odinger (NLS) equation \cite{SPS}. In particular, we construct an auto-B\"acklund transformation for this discrete system, its superposition principle, and we employ them in the construction of the one- and two-soliton solutions of the Adler--Yamilov system. 
\bigskip

\noindent \textbf{PACS numbers:} 02.30.Ik, 02.90.+p, 03.65.Fd.

\noindent \textbf{Mathematics Subject Classification 2020:} 37K60, 39A36, 35Q55, 16T25.

\noindent \textbf{Keywords:} Darboux transformations, B\"acklund transformations, quad-graph equations, partial

\noindent {\phantom{\textbf{Keywords:}}} difference equations, integrable lattice equations, 3D-consistency, soliton solutions.


\end{abstract}

\section{Introduction}
 It has become understood over the past few decades that integrable systems of partial difference equations (P$\Delta$Es) are interesting in their own right, see for instance \cite{Hiet-Frank-Joshi} and references therein. On the one hand, they may model various natural phenomena, as well as processes in industry and the IT sector. On the other hand, they have many interesting algebro-geometric properties \cite{Bobenko-Suris-book, Hiet-Frank-Joshi},  and are related to many important equations of Mathematical Physics such as the Yang--Baxter equation and the tetrahedron equation \cite{ABS-2005, Caudrelier, Kassotakis-Tetrahedron, Pap-Tongas0}. Moreover, they can be derived from the discretisation of nonlinear partial differential equations (PDEs). One such approach is provided by the Darboux transformations of integrable nonlinear PDEs of evolution type \cite{ SPS, Xue-Levi-Liu}; namely, the resulting relations from the permutability of two Darboux transformations can be interpreted as a P$\Delta$E.
  
 In this paper, we focus on a special class of P$\Delta$Es, the so-called {\sl{quad-graph equations}}, or systems thereof. Quad-graph systems are equations of P$\Delta$Es defined on an elementary quadrilateral of the two-dimensional lattice. In particular, they are systems of the form
\begin{equation}\label{quad-graph1}
Q(f_{00},f_{10},f_{01},f_{11};\alpha,\beta)=0,
\end{equation}
where $Q$ is a function of its arguments $f_{ij}$, $i,j=0,1$, and may also depend on parameters $\alpha$ and $\beta$. Schematically, equation \eqref{quad-graph} can be represented on the square, where $f_{00}$, $f_{10}$, $f_{01}$ and $f_{11}$ are placed on vertices of the square and the parameters $\alpha$, $\beta$ are placed on the edges as in Figure \ref{quad-graph}. Equation \eqref{quad-graph} can be thought as a partial difference equation (P$\Delta$E) by identifying $f$ with a function of two discrete variables $n,m\in\mathbb{Z}$, and $f_{ij}$ with its shifts in the $n$ and $m$ direction, i.e. $f_{ij}=f(n+i,m+j)$.

\begin{figure}[ht]
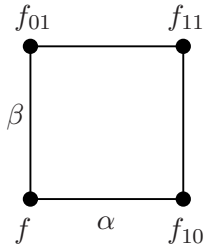

	\centering
	\centertexdraw{ 
		\setunitscale .8
		\move (0 0)  \lvec (1 0) \lvec(1 1) \lvec(0 1) \lvec(0 0)
		
		\move (0 0) \fcir f:0.0 r:0.05
		\move (1 0) \fcir f:0.0 r:0.05
		\move (0 1) \fcir f:0.0 r:0.05
		\move (1 1) \fcir f:0.0 r:0.05
		
		\htext (-.1 -.3) {$f$}
		\htext (.9 -.3) {$f_{10}$}
		\htext (-.1 1.1) {$f_{01}$}
		\htext (.9 1.1) {$f_{11}$}
		\htext (0.44 -.2) {$\al$}
		\htext (-.15 .45) {$\be$}
		
	}
	\caption{Quad--graph equation.}\label{quad-graph}
\end{figure}

Quad-graph equations have  attracted the interest of many researchers in the field of discrete integrable systems, see \cite{Hiet-Frank-Joshi} for a review. This led to the developement of methods for solving them, e.g. \cite{Atkinson2006,Wei-Fu, Hiet-Frank-Joshi, Frank-James-Hietarinta, Shi}, classifying them, e.g. \cite{ James-Maciej,Boll,PaulX}, analysing their integrability properties, e.g. \cite{Sasha-Pavlos,PaulX-2,Sarah-Frank,PaulX-3}, and relating them to the theory of Yang--Baxter and tetrahedron maps, e.g. \cite{Fordy-Pavlos-2, Pap-Tongas0}. Of particular interest are the so-called 3D-consistent equations which can be extended to the three-dimensional lattice in a consistent way. For such equations a Lax representation and a B\"acklund transformation can be constructed in a systematic manner by employing the 3D consistency of the system \cite{Hiet-Frank-Joshi}. 

In this paper, we propose a {\sl{discrete Darboux--Lax scheme}} for deriving B\"acklund transformations and constructing solutions for quad-graph equations which are not necessarily 3D consistent but have a Lax representation. More precisely, we employ gauge-type transformations of the Lax pair and demonstrate how they give rise to B\"acklund transformations for the discrete system. As an illustrative example we use the Adler--Yamilov system which was derived as a discretisation of the NLS equation via a Darboux transformation in \cite{SPS}.

The paper is organised as follows. In the next section we provide all the necessary definitions for the text to be self-contained. More specifically, after fixing our notation, we give the definitions of the integrability of a system of P$\Delta$Es in the sense of the existence of a Lax pair, and of the B\"acklund transformation. In Section \ref{sec:discrete-DL} we present the discrete Darboux--Lax scheme for constructing B\"acklund transformations and solving integrable P$\Delta$Es which do not necessarily possess the 3D consistency property. In Section \ref{sec:AY} we apply our results to the Adler--Yamilov system given in \cite{SPS} and derive the one- and two-soliton solutions by using the associated B\"acklund  transformation and the corresponding Bianchi diagram. 
Finally, the closing section contains a summary of the obtained results and a discussion on how they can be extended and generalised.

\section{Integrability of difference equations} \label{sec:preliminaries}

Let us start this section by introducing our notation. In what follows we deal with systems of equations which involve unknown functions depending on the two discrete variables $n$ and $m$. The dependence of a function $f = f(n,m)$ on these variables will be denoted with indices in the following way.
$$ f_{ij} = f(n+i,m+j), \quad {\mbox{for all }} i,j \in {\mathbb{Z}}$$
We also denote by $\mathcal{S}$ and $\mathcal{T}$ the shift operators in the $n$ and $m$ direction, respectively. Their action on a function $f$ is defined as $${\mathcal{S}}^k(f_{00})=f_{k0}, \quad {\mathcal{T}}^\ell(f_{00})=f_{0\ell}.$$
We denote vectors with bold face letters, e.g. $\textbf{f}_{00}=(f_{00}^{(1)},f_{00}^{(2)},\ldots f_{00}^{(k)})$. For systems of equations we also use bold letters. In particular, a system of quad-graph equations will be denoted by
\begin{equation} \label{quad-graph}
\textbf{Q}({\textbf{f}}_{00},\textbf{f}_{10},\textbf{f}_{01},\textbf{f}_{11})=0.
\end{equation}
Finally, we denote matrices with roman uppercase letters. For instance $\rm{L}(\mathbf{f}_{00},\mathbf{f}_{10};\al,\lambda)$ denotes a matrix with elements depending on $\mathbf{f}_{00}$, $\mathbf{f}_{10}$, $\al$ and $\lambda$. The semicolon in the arguments of the matrix is used to separate the fields $(\mathbf{f}_{00},\mathbf{f}_{10})$ from the parameters $\al$, $\lambda$. By $\lambda$ we denote the spectral parameter throughout the text.

With our notation, let ${\rm{{L}}}(\mathbf{f}_{00},\mathbf{f}_{10};\lambda)$ and ${\rm{M}}(\mathbf{f}_{00},\mathbf{f}_{01};\lambda)$ be two $k \times k$ invertible matrices which depend on a function $\bf{f}$ and the spectral parameter $\lambda$.\footnote{Matrices ${\rm{L}}$, ${\rm{M}}$ may also depend on parameters but, as they do not play any role in our discussion in this and the following section, we suppress this dependence.} Let also $\Psi=\Psi(n,m)$ be an auxiliary $k \times k$ matrix, and consider the following overdetermined linear system.
\begin{equation}\label{linear-sys}
	{\cal{S}}(\Psi)  ={\rm{L}}(\mathbf{f}_{00},\mathbf{f}_{10};\lambda) \Psi,\qquad {\cal{T}}(\Psi)={\rm{M}}(\mathbf{f}_{00},\mathbf{f}_{01};\lambda) \Psi.
\end{equation}
For given $\bf{f}$, this system has a solution $\Psi$ provided that the two equations are consistent, i.e. the compatibility condition ${\cal{T}}\left({\cal{S}}(\Psi)\right)={\cal{S}}\left({\cal{T}}(\Psi)\right)$ holds. The latter condition can be written explicitly as
\begin{equation}\label{compatibility}
	{\rm{L}}(\mathbf{f}_{01},\mathbf{f}_{11};\lambda){\rm{M}}(\mathbf{f}_{00},\mathbf{f}_{01};\lambda)={\rm{M}}(\mathbf{f}_{10},\mathbf{f}_{11};\lambda){\rm{L}}(\mathbf{f}_{00},\mathbf{f}_{10};\lambda).
\end{equation}
If the above equation holds if and only if $\bf{f}$ satisfies \eqref{quad-graph}, then we say that system of P$\Delta$Es  \eqref{quad-graph} is {\emph{integrable}}, system \eqref{linear-sys} is a {\emph{Lax pair}} for \eqref{quad-graph}, and  equation \eqref{compatibility} is called a {\emph{Lax representation}} for \eqref{quad-graph}. Moreover, matrices ${\rm{{L}}}(\mathbf{f}_{00},\mathbf{f}_{10};\lambda)$ and ${\rm{M}}(\mathbf{f}_{00},\mathbf{f}_{01};\lambda)$ are referred to as {\emph{Lax matrices}}, and without loss of generality we assume that they have constant determinants.

We close this section by giving the definition of B\"acklund transformation for quad-graph equations. Such transformations are related to the notion of integrability as it will become evident in the next section where we explore their connection to Lax pairs via the Darboux--Lax scheme.

\begin{definition}
Let ${\bf{Q}}[{\bf{f}}] := {\bf{Q}}({\bf{f}}_{00},{\bf{f}}_{10},{\bf{f}}_{01},{\bf{f}}_{11})=0$ and ${\bf{P}}[{\bf{g}}] := {\bf{P}}({\bf{g}}_{00},{\bf{g}}_{10},{\bf{g}}_{01},{\bf{g}}_{11})=0$ be two systems of quad-graph equations. Let also
\begin{equation}\label{BT-def}
\mathcal{B}(\mathbf{f}_{00},\mathbf{f}_{10},\mathbf{f}_{01},\mathbf{g}_{00},\mathbf{g}_{10},\mathbf{g}_{01} ; \varepsilon)=0
\end{equation}
be a system of  P$\Delta$Es.  If system $\mathcal{B}=0$ can be integrated for $\bf{g}$ provided that $\bf{f}$ is a solution of $\mathbf{Q}[{\bf{f}}]=0$, and the resulting ${\bf{g}}(n,m)$ is a solution to $\mathbf{P}[{\bf{g}}]=0$, and vice versa, then system \eqref{BT-def} is called a (hetero-) B\"acklund transformation for equations ${\bf{Q}}[{\bf{f}}]=0$ and ${\bf{P}}[{\bf{g}}] = 0$. If $\mathbf{Q}[{\bf{a}}] = \mathbf{P}[{\bf{a}}]$, then \eqref{BT-def} is called an auto-B\"acklund transformation for equation ${\bf{Q}}[{\bf{f}}] = 0$.
\end{definition}

\section{Discrete Darboux--Lax scheme} \label{sec:discrete-DL}
It is well known that for quad-graph systems which possess the 3D consistency property a Lax representation can be derived algorithmically \cite{Frank2002, Bobenko-Suris, Bridgman, Hiet-Frank-Joshi} and a B\"acklund transformation can be constructed systematically \cite{Atkinson2006, Hiet-Frank-Joshi}. However, there do exist quad-graph systems which have a Lax pair but do not possess the 3D consistency property. For this kind of systems we propose here a scheme for constructing Darboux and B\"acklund transformations. It should be emphasized here that this scheme works for {\emph{any}} system of difference equations irrespectively of their 3D consistency.

For the integrable quad-graph system
\begin{equation}\label{Lax-rep}
	\mathbf{Q}(\mathbf{f}_{00},\mathbf{f}_{10},\mathbf{f}_{01},\mathbf{f}_{11})=0\quad \Longleftrightarrow \quad {\rm{L}}(\mathbf{f}_{01},\mathbf{f}_{11};\lambda) {\rm{M}}(\mathbf{f}_{00},\mathbf{f}_{01};\lambda)={\rm{M}}(\mathbf{f}_{10},\mathbf{f}_{11};\lambda){\rm{L}}(\mathbf{f}_{00},\mathbf{f}_{10};\lambda),
\end{equation}
we define the discrete Darboux transformation as follows.
\begin{definition}\label{Darboux-transform}
A discrete Darboux transformation for the integrable P$\Delta$E \eqref{Lax-rep} is a gauge-like, spectral parameter-dependent transformation that leaves Lax matrices $\rm{L}$ and $\rm{M}$ covariant. That is, a transformation which  involves an invertible matrix $\rm{B}$ such that
\begin{subequations}\label{Darboux-def}
\begin{align}
   &{\rm{L}}(\mathbf{f}_{00},\mathbf{f}_{10};\lambda) \longmapsto {\rm{L}}(\tilde{\mathbf{f}}_{00},\tilde{\mathbf{f}}_{10};\lambda) = {\cal{S}}({\rm{B}})  {\rm{L}}(\mathbf{f}_{00},\mathbf{f}_{10};\lambda) {\rm{B}}^{-1},\label{Darboux-def-L}\\
    & {\rm{M}}(\mathbf{f}_{00},\mathbf{f}_{01};\lambda) \longmapsto {\rm{M}}(\tilde{\mathbf{f}}_{00},\tilde{\mathbf{f}}_{01};\lambda)= {\cal{T}}({\rm{B}})  {\rm{M}}(\mathbf{f}_{00},\mathbf{f}_{01};\lambda) {\rm{B}}^{-1}.\label{Darboux-def-M}
\end{align}
\end{subequations}
\end{definition}

A consequence of the above definition is the following proposition.

\begin{proposition}
The Darboux transformation maps fundamental solutions of the linear system
\begin{equation}\label{linear-sys-1}
{\cal{S}}(\Psi)= {\rm{L}}(\mathbf{f}_{00},\mathbf{f}_{10};\lambda) \Psi,\quad {\cal{T}}(\Psi)={\rm{M}}(\mathbf{f}_{00},\mathbf{f}_{01};\lambda) \Psi,
\end{equation}
to fundamental solutions of the linear system
\begin{equation}
{\cal{S}}(\tilde{\Psi})= {\rm{L}}(\tilde{\mathbf{f}}_{00},\tilde{\mathbf{f}}_{10};\lambda) \tilde{\Psi},\quad {\cal{T}}(\tilde{\Psi})={\rm{M}}(\tilde{\mathbf{f}}_{00},\tilde{\mathbf{f}}_{01};\lambda) \tilde{\Psi},
\end{equation}
via the relation $\tilde{\Psi}={\rm{B}} \Psi$.
\end{proposition}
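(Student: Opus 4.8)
The plan is to verify directly that $\tilde{\Psi} := {\rm{B}}\Psi$ satisfies the tilded linear system whenever $\Psi$ satisfies the untilded one, and then to argue that fundamentality is preserved. First I would substitute $\tilde{\Psi} = {\rm{B}}\Psi$ into the left-hand side of the first tilded equation: using that $\mathcal{S}$ is a shift operator acting multiplicatively, $\mathcal{S}(\tilde{\Psi}) = \mathcal{S}({\rm{B}})\,\mathcal{S}(\Psi)$. Then I apply the untilded Lax equation $\mathcal{S}(\Psi) = {\rm{L}}(\mathbf{f}_{00},\mathbf{f}_{10};\lambda)\Psi$ to get $\mathcal{S}(\tilde{\Psi}) = \mathcal{S}({\rm{B}})\,{\rm{L}}(\mathbf{f}_{00},\mathbf{f}_{10};\lambda)\,\Psi$. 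Now I insert the identity in the form ${\rm{B}}^{-1}{\rm{B}}$ before $\Psi$ and invoke the Darboux covariance relation \eqref{Darboux-def-L}, namely $\mathcal{S}({\rm{B}})\,{\rm{L}}(\mathbf{f}_{00},\mathbf{f}_{10};\lambda)\,{\rm{B}}^{-1} = {\rm{L}}(\tilde{\mathbf{f}}_{00},\tilde{\mathbf{f}}_{10};\lambda)$, which yields $\mathcal{S}(\tilde{\Psi}) = {\rm{L}}(\tilde{\mathbf{f}}_{00},\tilde{\mathbf{f}}_{10};\lambda)\,{\rm{B}}\,\Psi = {\rm{L}}(\tilde{\mathbf{f}}_{00},\tilde{\mathbf{f}}_{10};\lambda)\,\tilde{\Psi}$, as required.

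The second tilded equation is handled identically, replacing $\mathcal{S}$ by $\mathcal{T}$, ${\rm{L}}$ by ${\rm{M}}$, and using \eqref{Darboux-def-M} in place of \eqref{Darboux-def-L}: from $\mathcal{T}(\tilde{\Psi}) = \mathcal{T}({\rm{B}})\,\mathcal{T}(\Psi) = \mathcal{T}({\rm{B}})\,{\rm{M}}(\mathbf{f}_{00},\mathbf{f}_{01};\lambda)\,\Psi = \mathcal{T}({\rm{B}})\,{\rm{M}}(\mathbf{f}_{00},\mathbf{f}_{01};\lambda)\,{\rm{B}}^{-1}\,{\rm{B}}\,\Psi = {\rm{M}}(\tilde{\mathbf{f}}_{00},\tilde{\mathbf{f}}_{01};\lambda)\,\tilde{\Psi}$. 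It remains to check that $\tilde{\Psi}$ is a \emph{fundamental} solution, i.e. an invertible $k\times k$ matrix solution: since ${\rm{B}}$ is invertible by hypothesis (it appears in Definition \ref{Darboux-transform}) and $\Psi$ is invertible as a fundamental solution, the product $\tilde{\Psi} = {\rm{B}}\Psi$ is invertible, so $\det\tilde{\Psi} = \det{\rm{B}}\,\det\Psi \neq 0$. One can also remark that the construction is reversible via $\Psi = {\rm{B}}^{-1}\tilde{\Psi}$, so the correspondence between fundamental solution sets is a bijection.

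I do not anticipate a genuine obstacle here — the statement is essentially a bookkeeping consequence of Definition \ref{Darboux-transform}, and the only point requiring a modicum of care is the order of multiplication (the gauge matrix acts on the left, and $\mathcal{S}({\rm{B}})$ rather than ${\rm{B}}$ itself appears because the shift operator must be applied to every factor of $\mathcal{S}(\tilde{\Psi})$). If anything needs emphasis, it is that this compatibility of the two tilded equations is automatic: the covariance relations \eqref{Darboux-def} were designed precisely so that the same matrix ${\rm{B}}$ intertwines both halves of the Lax pair simultaneously, which is what makes $\tilde{\Psi} = {\rm{B}}\Psi$ a consistent solution of the full overdetermined system rather than just one of its two equations.
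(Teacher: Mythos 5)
Your argument is correct and follows essentially the same route as the paper's proof: shift $\tilde{\Psi}={\rm{B}}\Psi$, use multiplicativity of the shift, apply the untilded linear system, invoke the covariance relations \eqref{Darboux-def}, and conclude fundamentality from $\det\tilde{\Psi}=\det{\rm{B}}\,\det\Psi\neq 0$. The additional remark about the inverse correspondence via ${\rm{B}}^{-1}$ is a harmless extra not present in the paper.
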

\begin{proof}
Let $\Psi=\Psi(n,m)$ be a fundamental solution of the linear problem \eqref{linear-sys-1}. We set $\tilde{\Psi}={\rm{B}} \Psi$ and then we shift in the $n$ direction to find that
$$
{\cal{S}}(\tilde{\Psi})= {\cal{S}}({\rm{B}}) {\cal{S}}(\Psi) \stackrel{\eqref{linear-sys-1}}{=}{\cal{S}}({\rm{B}})  {\rm{L}}(\mathbf{f}_{00},\mathbf{f}_{10};\lambda) \Psi\stackrel{\eqref{Darboux-def-L}}{=}{\rm{L}}(\tilde{\mathbf{f}}_{00},\tilde{\mathbf{f}}_{10};\lambda) {\rm{B}} \Psi={\rm{L}}(\tilde{\mathbf{f}}_{00},\tilde{\mathbf{f}}_{10};\lambda)\tilde{\Psi}.
$$
Similarly, starting with $\tilde{\Psi}={\rm{B}} \Psi$ we shift in the $m$ direction and employ \eqref{linear-sys-1} and \eqref{Darboux-def-M} to find that ${\cal{T}}(\tilde{\Psi})={\rm{M}}(\tilde{\mathbf{f}}_{00},\tilde{\mathbf{f}}_{01};\lambda)\tilde{\Psi}$. Moreover, the solution $\tilde{\Psi}$ is fundamental, since $\Psi$ is fundamental and $\det(\tilde{\Psi})=\det{\rm{B}}\det{\Psi}\neq 0$.
\end{proof}

Using the above definition of the discrete Darboux transformation and corresponding Darboux matrix, we propose the following approach for the construction of a Darboux matrix and B{\"a}cklund transformation, as well as for the derivation of the superposition principle for the B\"acklund transformation.
\begin{itemize}
\item We start by assuming an initial form for matrix $\rm{B}$. The simplest assumption we can make is that matrix ${\rm{B}}$ depends linearly on the spectral parameter, i.e.
\begin{equation} \label{def-B}
{\rm{B}} = \lambda {\rm{B}}^{(1)} + {\rm{B}}^{(0)},
\end{equation}
where matrices $ {\rm{B}}^{(1)}$ and ${\rm{B}}^{(0)}$ do not depend on $\lambda$.

\item We determine the elements of these two matrices by employing equations \eqref{Darboux-def} written as
\begin{equation}\label{det-eqs-DB}
{\rm{L}}(\tilde{\mathbf{f}},\tilde{\mathbf{f}}_{10};\lambda) {\rm{B}}= {\cal{S}}({\rm{B}}){\rm{L}}(\mathbf{f},\mathbf{f}_{10};\lambda), \quad 
{\rm{M}}(\tilde{\mathbf{f}},\tilde{\mathbf{f}}_{01};\lambda) {\rm{B}}= {\cal{T}}({\rm{B}}){\rm{M}}(\mathbf{f},\mathbf{f}_{01};\lambda).
\end{equation}
In our calculations we also have to take into account that $\det({\rm{B}})$ is a constant, an obvious consequence of \eqref{det-eqs-DB} and our assumption that the Lax matrices have constant determinants.

\item The derived Darboux matrix will depend in general on the `old' and the `new' fields $\bf{f}$ and $\tilde{\bf{f}}$, as well as the spectral parameter $\lambda$, and a parameter $\varepsilon$. It may also depend on some auxilliary function, a potential, $g(n,m)$. That is,
$$
{\rm{B}}=\lambda {\rm{B}}^{(1)}(\mathbf{f}_{00},\tilde{\mathbf{f}}_{00},g;\varepsilon) + {\rm{B}}^{(0)}(\mathbf{f}_{00},\tilde{\mathbf{f}}_{00},g;\varepsilon).
$$
\item In the construction of the Darboux matrix \eqref{def-B} there will be some algebraic relations that define the Darboux matrix elements, as well as some difference equations for its elements. These difference equations will be of the form
\begin{equation}\label{discrete-BT}
\mathcal{B}^{(n)}(\mathbf{f}_{00},\mathbf{f}_{10},\tilde{\mathbf{f}}_{00},\tilde{\mathbf{f}}_{10},g;\varepsilon)=0, \qquad 	\mathcal{B}^{(m)}(\mathbf{f}_{00},\mathbf{f}_{01},\tilde{\mathbf{f}}_{00},\tilde{\mathbf{f}}_{01},g;\varepsilon)=0,
\end{equation}
and constitute the $n$- and the $m$-part, respectively, of an auto-B\"acklund transformation that relates the `old' and the `new' fields. In what follows, we will denote this transformation simply with ${\mathcal{B}}(\mathbf{f},\tilde{\mathbf{f}},g;\varepsilon) = 0$.

\item The Bianchi commuting diagram, aka superposition principle, for the auto-B\"acklund transformation \eqref{discrete-BT} follows from the permutation of four Darboux matrices according to the diagram in Figure \ref{fig:Bianchicomd}.

\begin{figure}[h]
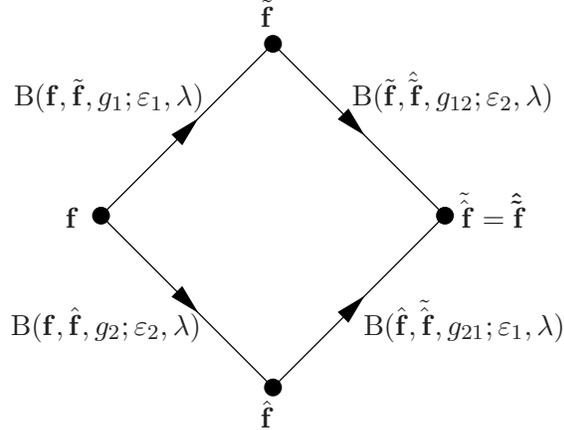

	\centertexdraw{ \setunitscale .6 \linewd 0.01 \arrowheadtype t:F
		\move(0 0) \lvec (1.5 1.5) \lvec(3 0) \lvec(1.5 -1.5) \lvec(0 0) 
		\move(0 0) \fcir f:0.0 r:0.08
		\move(1.5 1.5) \fcir f:0.0 r:0.08
		\move(3 0) \fcir f:0.0 r:0.08
		\move(1.5 -1.5) \fcir f:0.0 r:0.08
		\move(.8 .8) \avec(.85 .85)
		\move(2.25 .75) \avec(2.3 .7)
		\move(.8 -.8) \avec(.85 -.85)
		\move(2.25 -.75) \avec(2.3 -.7) 
		\htext (-0.3 -0.1) {$\bf{f}$} \htext (1.4 1.65) {$\tilde{\bf{f}}$} 
		\htext (3.15 -.1) {$\tilde{\hat{\bf{f}}} = \hat{\tilde{\bf{f}}}$} \htext (1.4 -1.85) {$\hat{\bf{f}}$}
		\htext(-0.75 0.9){${\rm{B}}(\mathbf{f},\tilde{\mathbf{f}},g_1;\varepsilon_1,\lambda)$} 
		\htext(-0.78 -1.1){${\rm{B}}(\mathbf{f},\hat{\mathbf{f}},g_2;\varepsilon_2,\lambda)$} 
		\htext(2.2 0.9){${\rm{B}}(\tilde{\mathbf{f}},\hat{\tilde{\mathbf{f}}},g_{12};\varepsilon_2,\lambda)$} 
		\htext(2.3 -1.1){${\rm{B}}(\hat{\mathbf{f}},\tilde{\hat{\mathbf{f}}},g_{21};\varepsilon_1,\lambda)$}}
	\caption{Bianchi commuting diagram. It should be noted that $g_{12} \ne g_{21}$.} \label{fig:Bianchicomd}
\end{figure}

More precisely, starting with a solution $\bf{f}$ of \eqref{Lax-rep} we can construct two new solutions $\tilde{\bf{f}}$ and $\hat{\bf{f}}$ using the B\"acklund transformations ${\mathcal{B}}(\mathbf{f},\tilde{\mathbf{f}},g_1;\varepsilon_1) =0$ and ${\mathcal{B}}(\mathbf{f},\hat{\mathbf{f}},g_2;\varepsilon_2)=0$, respectively. Then, we can use  the B\"acklund transformation with initial solution $\tilde{\bf{f}}$, a new potential $g_{12}$ and parameter $\varepsilon_2$ to derive a new solution $\hat{\tilde{\mathbf{f}}}$, i.e. ${\mathcal{B}}(\tilde{\mathbf{f}},\hat{\tilde{\mathbf{f}}},g_{12};\varepsilon_2) =0$. In the same fashion, we can start with $\hat{\bf{f}}$, a new potential $g_{21}$ and parameter $\varepsilon_1$ to derive solution $\tilde{\hat{\mathbf{f}}}$, i.e. ${\mathcal{B}}(\hat{\mathbf{f}},\tilde{\hat{\mathbf{f}}},g_{21};\varepsilon_1) =0$. By requiring $\tilde{\hat{\mathbf{f}}} = \hat{\tilde{\bf{f}}}$, according to Figure \ref{fig:Bianchicomd}, we can construct this solution algebraically and it follows from the commutativity of the corresponding Darboux matrices, 
\begin{equation}\label{BT-Bianchi}
{\rm{B}}(\tilde{\mathbf{f}},\hat{\tilde{\mathbf{f}}},g_{12};\varepsilon_2,\lambda){\rm{B}}(\mathbf{f},\tilde{\mathbf{f}},g_1;\varepsilon_1,\lambda) = {\rm{B}}(\hat{\mathbf{f}},\tilde{\hat{\mathbf{f}}},g_{21};\varepsilon_1,\lambda){\rm{B}}(\mathbf{f},\hat{\mathbf{f}},g_2;\varepsilon_2,\lambda).
\end{equation}
\end{itemize}

\section{The Adler--Yamilov system} \label{sec:AY}

From the discussion in the previous section it is already obvious that multidimensional consistency is not essential for this method of derivation of B\"acklund transformations. In this section, we demonstrate the application of our scheme using as an illustrative example the Adler--Yamilov system related to the nonlinear Schr\"odinger equation \cite{SPS}.

The Adler--Yamilov system can be written as
\begin{equation}\label{Adler-Yamilov}
 p_{10}-p_{01}-\frac{\al-\be}{1+p_{00}q_{11}}p_{00} = 0 ,\qquad q_{10}-q_{01}+\frac{\al-\be}{1+p_{00}q_{11}}q_{11} = 0,
\end{equation}
where $\al$, $\be$ are complex parameters. Moreover, using matrix
$${\rm{L}}(f,g;a,\lambda) = \lambda {\rm{L}}^{(1)} + {\rm{L}}^{(2)}(f,g;a) = \lambda \begin{pmatrix}
	1 & 0\\
	0 & 0
\end{pmatrix}
+
\begin{pmatrix}
	a +f g & f\\
	g & 1
\end{pmatrix},$$
a Lax pair for \eqref{Adler-Yamilov} can be written as
\begin{subequations}\label{Lax-pair-AY}
\begin{align}
&{\cal{S}}(\Psi)={\rm{L}}(p_{00},q_{10};\al,\lambda)\Psi=\left(\lambda \begin{pmatrix}
   1 & 0\\
   0 & 0
\end{pmatrix}
+
\begin{pmatrix}
   \al+p_{00}q_{10} & p_{00}\\
   q_{10} & 1
\end{pmatrix}
\right)\Psi,\\
&{\cal{T}}(\Psi)={\rm{L}}(p_{00},q_{01};\be,\lambda)\Psi=\left(\lambda \begin{pmatrix}
   1 & 0\\
   0 & 0
\end{pmatrix}
+
\begin{pmatrix}
   \be+p_{00}q_{01} & p_{00}\\
   q_{01} & 1
\end{pmatrix}
\right)\Psi.
\end{align}
\end{subequations}
 
\subsection{The discrete Darboux--Lax scheme for the Adler--Yamilov system} \label{sec:DL-AY}

We start with our choice \eqref{def-B} for the initial form of the Darboux matrix $\rm{B}$, 
\begin{equation}
{\rm{B}} = \lambda {\rm{B}}^{(1)} + {\rm{B}}^{(0)} = \lambda  \begin{pmatrix}
	f^{(1)}_{00} & f^{(2)}_{00} \\
	f^{(3)}_{00} & f^{(4)}_{00}
\end{pmatrix} + 
\begin{pmatrix}
	g^{(1)}_{00} & g^{(2)}_{00} \\
	g^{(3)}_{00} & g^{(4)}_{00}
\end{pmatrix},
\end{equation}
and the determining equations \eqref{det-eqs-DB}, which now become
\begin{subequations} \label{det-eqs-DB-1}
\begin{align}
&\left( \lambda {\rm{L}}^{(1)} + {\rm{L}}^{(2)}(\tilde{p}_{00},\tilde{q}_{10};\al) \right) \left( \lambda {\rm{B}}^{(1)} + {\rm{B}}^{(0)}\right)= \left( \lambda {\cal{S}}\left({\rm{B}}^{(1)}\right) + {\cal{S}}\left({\rm{B}}^{(0)}\right) \right) \left( \lambda {\rm{L}}^{(1)} + {\rm{L}}^{(2)}(p_{00},q_{10};\al)\right),\label{det-eqs-DB-1-n}\\
&\left( \lambda {\rm{L}}^{(1)} + {\rm{L}}^{(2)}(\tilde{p}_{00},\tilde{q}_{01};\be) \right) \left( \lambda {\rm{B}}^{(1)} + {\rm{B}}^{(0)}\right)= \left( \lambda {\cal{T}}\left({\rm{B}}^{(1)}\right) + {\cal{T}}\left({\rm{B}}^{(0)}\right) \right) \left( \lambda {\rm{L}}^{(1)} + {\rm{L}}^{(2)}(p_{00},q_{01};\be)\right).\label{det-eqs-DB-1-m}
\end{align}
\end{subequations}
Since all the matrices involved in \eqref{det-eqs-DB-1} are independent of $\lambda$, we collect the coefficients of the different powers of the spectral parameter. The $\lambda^2$ terms yield equations 
$${\rm{L}}^{(1)}  {\rm{B}}^{(1)} = {\cal{S}}\left( {\rm{B}}^{(1)}\right)  {\rm{L}}^{(1)}, \quad {\rm{L}}^{(1)}  {\rm{B}}^{(1)} = {\cal{T}}\left( {\rm{B}}^{(1)}\right)  {\rm{L}}^{(1)},$$
which lead to
\begin{equation}\label{eq:det-B-1}
f^{(1)}_{00} = c_1 \in {\mathbb{R}},\quad f^{(2)}_{00} = f^{(3)}_{00} = 0.
\end{equation}
The $\lambda$ terms in relations \eqref{det-eqs-DB-1} are
$${\rm{L}}^{(1)}  {\rm{B}}^{(0)} +  {\rm{L}}^{(2)}(\tilde{p}_{00},\tilde{q}_{10};\al)  {\rm{B}}^{(1)} =  {\cal{S}}\left({\rm{B}}^{(1)}\right) {\rm{L}}^{(2)}(p_{00},q_{10};\al) + {\cal{S}}\left({\rm{B}}^{(0)}\right) {\rm{L}}^{(1)},$$
$${\rm{L}}^{(1)}  {\rm{B}}^{(0)} +  {\rm{L}}^{(2)}(\tilde{p}_{00},\tilde{q}_{01};\be)  {\rm{B}}^{(1)} =  {\cal{T}}\left({\rm{B}}^{(1)}\right) {\rm{L}}^{(2)}(p_{00},q_{01};\be) + {\cal{T}}\left({\rm{B}}^{(0)}\right) {\rm{L}}^{(1)},$$
which in view of \eqref{eq:det-B-1} imply
\begin{equation}
f^{(4)}_{00} = c_2 \in {\mathbb{R}}, \quad g^{(2)}_{00} = c_1 p_{00} - c_2 \tilde{p}_{00}, \quad g^{(3)}_{00}= c_1 \tilde{q}_{00} - c_2 q_{00},\label{eq:det-B-2a}
\end{equation}
\begin{equation}
\left({\cal{S}} - 1\right)\left(g^{(1)}_{00}\right)  = c_1 \left(\tilde{p}_{00} \tilde{q}_{10} -p_{00} q_{10}\right), \quad  \left({\cal{T}} - 1\right)\left(g^{(1)}_{00}\right)  = c_1 \left( \tilde{p}_{00} \tilde{q}_{01} - p_{00} q_{01} \right).\label{eq:det-B-2b}
\end{equation}
The $\lambda$ independent terms,
\begin{equation} \label{eq:lambda-0}
{\rm{L}}^{(2)}(\tilde{p}_{00},\tilde{q}_{10};\al) {\rm{B}}^{(0)} =  {\cal{S}}\left({\rm{B}}^{(0)}\right)  {\rm{L}}^{(2)}(p_{00},q_{10};\al),\quad   {\rm{L}}^{(2)}(\tilde{p}_{00},\tilde{q}_{01};\be) {\rm{B}}^{(0)} =  {\cal{T}}\left({\rm{B}}^{(0)}\right)  {\rm{L}}^{(2)}(p_{00},q_{01};\be),
\end{equation}
determine $g^{(4)}_{00}$ and provide us with the corresponding auto-B\"acklund transformation. Specifically, the $(2,2)$-elements of the above relations yield
\begin{equation}\label{eq:det-B-3}
 \left({\cal{S}} - 1\right)\left(g^{(4)}_{00}\right)  = c_2 \left(p_{00} q_{10}-\tilde{p}_{00} \tilde{q}_{10} \right), \quad  \left({\cal{T}} - 1\right)\left(g^{(4)}_{00}\right)  = c_2 \left( p_{00} q_{01}-\tilde{p}_{00} \tilde{q}_{01} \right).
\end{equation}
The remaining entries of \eqref{eq:lambda-0} in view of \eqref{eq:det-B-2b} and \eqref{eq:det-B-3} become 
\begin{subequations} \label{eq:det-B-3a}
\begin{eqnarray}
&& c_1 \left(p_{10} + p_{00} (\al + p_{00} q_{10})\right) - c_2 \left(\tilde{p}_{10} + \tilde{p}_{00} (\al + \tilde{p}_{00} \tilde{q}_{10})\right) = g^{(4)}_{00} \tilde{p}_{00} - g^{(1)}_{00} p_{00}, \\
&& c_1 \left(p_{01} + p_{00} (\be + p_{00} q_{01})\right) - c_2 \left(\tilde{p}_{01} + \tilde{p}_{00} (\be + \tilde{p}_{00} \tilde{q}_{01})\right) = g^{(4)}_{00} \tilde{p}_{00} - g^{(1)}_{00} p_{00}, \\
&& \tilde{q}_{10} = \frac{c_1 \tilde{q}_{00} - c_2 (q_{00} - \al q_{10}) - g^{(4)}_{00} q_{10}}{c_1 (\al + p_{00} q_{10}) -c_2 \tilde{p}_{00} q_{10} - g^{(1)}_{00}},\quad  \tilde{q}_{01} = \frac{c_1 \tilde{q}_{00} - c_2 (q_{00} - \be q_{01}) - g^{(4)}_{00} q_{01}}{c_1 (\be + p_{00} q_{01}) -c_2 \tilde{p}_{00} q_{01} - g^{(1)}_{00}},
\end{eqnarray}
\end{subequations} 
which play the role of the B\"acklund transformation.

Finally we require the determinant of the Darboux matrix $\rm{B}$, which in view of \eqref{eq:det-B-1} and \eqref{eq:det-B-2a} can be written as
$\det\left({\rm{B}}\right) = c_1 c_2 \lambda^2 + \left(c_2 g^{(1)}_{00} + c_1 g^{(4)}_{00} \right) \lambda + g^{(1)}_{00} g^{(4)}_{00} -  \left( c_1 p_{00} - c_2 \tilde{p}_{00}\right) \left( c_1 \tilde{q}_{00} - c_2 q_{00}\right)$,
to be constant. This requirement implies the relations
\begin{equation}\label{eq:det-B-4}
c_2 g^{(1)}_{00} + c_1 g^{(4)}_{00} = \kappa,\quad  g^{(1)}_{00} g^{(4)}_{00} -  \left( c_1 p_{00} - c_2 \tilde{p}_{00}\right) \left( c_1 \tilde{q}_{00} - c_2 q_{00}\right) = \varepsilon, \quad \kappa, \varepsilon \in {\mathbb{R}}.
\end{equation}
It should be noted that the first relation in \eqref{eq:det-B-4} may also be viewed as a  consequence of  \eqref{eq:det-B-2b} and \eqref{eq:det-B-3}.

Summarizing, so far we have shown that the Darboux matrix $\rm{B}$  has the form
\begin{equation} \label{eq:det-B-5}
	{\rm{B}} = \lambda {\rm{B}}^{(1)} + {\rm{B}}^{(0)} = \lambda  \begin{pmatrix}
		c_1 & 0 \\
		0 & c_2
	\end{pmatrix} + 
	\begin{pmatrix}
		g^{(1)}_{00} &c_1 p_{00} - c_2 \tilde{p}_{00} \\
		c_1 \tilde{q}_{00} - c_2 q_{00} & g^{(4)}_{00}
	\end{pmatrix},
\end{equation}
where potentials $g^{(1)}$ and $g^{(4)}$ are determined by \eqref{eq:det-B-2b}, \eqref{eq:det-B-3} and \eqref{eq:det-B-4}, and the B\"acklund transformation is given by \eqref{eq:det-B-3a}.

We consider now two cases: (i) $c_1=0$ and $c_2 \ne 0$, and (ii) $c_1 \ne 0$ and $c_2 = 0$. 

\subsubsection*{First case: $c_1=0$ and $c_2 \ne 0$} \label{sec:case1}

If $c_1=0$ and $c_2 \ne 0$, then we can choose $c_2=1$ without loss of generality. Then equations \eqref{eq:det-B-2b} imply that $g^{(1)}_{00}$ is a constant and we choose $g^{(1)}_{00}= 1$.\footnote{It means we choose $\kappa=1$ in the first relation of \eqref{eq:det-B-4}.} Moreover,  the second relation in \eqref{eq:det-B-4} implies
	\begin{equation}
	g^{(4)}_{00} = \tilde{p}_{00} q_{00}+ \varepsilon, \quad \varepsilon \in {\mathbb{R}}.
	\end{equation}
With these choices, the Darboux matrix becomes
\begin{equation}\label{Darboux-matrix-AY-1}
	{\rm{B}}(q_{00},\tilde{p}_{00};\varepsilon)=\lambda
	\begin{pmatrix}
		0 & 0 \\
		0 & 1
	\end{pmatrix}
	+
	\begin{pmatrix}
		1 & -\tilde{p}_{00} \\
		-q_{00} & \varepsilon + \tilde{p}_{00} q_{00}
	\end{pmatrix},
\end{equation}
and relations \eqref{eq:det-B-3a} become
\begin{subequations}\label{BT-1}
	\begin{align}
		&  \tilde{p}_{10}=p_{00}+\frac{\al-\varepsilon}{1+\tilde{p}_{00}q_{10}}\tilde{p}_{00},
		\quad
		\tilde{q}_{10}=q_{00}-\frac{\al-\varepsilon}{1+\tilde{p}_{00}q_{10}}q_{10},\label{BT-1-n}\\
		& \tilde{p}_{01}=p_{00}+\frac{\be-\varepsilon}{1+\tilde{p}_{00}q_{01}}\tilde{p}_{00},
		\quad
		\tilde{q}_{01}=q_{00}-\frac{\be-\varepsilon}{1+\tilde{p}_{00}q_{01}}q_{01}.\label{BT-1-m}
	\end{align}
\end{subequations}
In view of the above choices and system \eqref{BT-1}, relations \eqref{eq:det-B-3} hold identically. 

System \eqref{BT-1} is an auto-B\"acklund transformation for the Adler--Yamilov system \eqref{Adler-Yamilov}. Indeed, if we shift equations \eqref{BT-1-n} in the $m$ direction, and  equations \eqref{BT-1-m} in the $n$ direction, respectively, then it can be readily verified that the two expressions for $\tilde{p}_{11}$ and the two expressions for $\tilde{q}_{11}$ coincide modulo the Adler--Yamilov system \eqref{Adler-Yamilov}. Conversely, we rearrange the above system for $p_{00}$, $q_{10}$ and $q_{01}$,
\begin{subequations}\label{BT-1-inv}
	\begin{align}
		&  q_{10}=\frac{q_{00} - \tilde{q}_{10}}{\al - \varepsilon -\tilde{p}_{00} (q_{00} - \tilde{q}_{10})},
		\quad
		p_{00}=\tilde{p}_{10} - \tilde{p}_{00} \left(\al - \varepsilon -\tilde{p}_{00} (q_{00} - \tilde{q}_{10}) \right),\label{BT-1-inv-n}\\
		&  q_{01}=\frac{q_{00} - \tilde{q}_{01}}{\be - \varepsilon -\tilde{p}_{00} (q_{00} - \tilde{q}_{01})},
\quad
p_{00}=\tilde{p}_{01} - \tilde{p}_{00} \left(\be - \varepsilon -\tilde{p}_{00} (q_{00} - \tilde{q}_{01}) \right).\label{BT-1-inv-m}
	\end{align}
\end{subequations}
If we shift the first equation in \eqref{BT-1-inv-n} in the $m$ direction and the first equation in \eqref{BT-1-inv-m} in the $n$ direction, then the resulting expressions for $q_{11}$ coincide provided that $\tilde{p}$ and $\tilde{q}$ satisfy system \eqref{Adler-Yamilov}. Moreover, subtracting the two expressions for $p_{00}$ in \eqref{BT-1-inv} we end up with 
$$ \tilde{p}_{10} - \tilde{p}_{01} - \tilde{p}_{00} \left( \tilde{p}_{00} (\tilde{q}_{10} - \tilde{q}_{01}) + \al - \be \right)=0,$$
which holds on solutions of \eqref{Adler-Yamilov}.

Finally, according to Figure \ref{fig:Bianchicomd} and relation \eqref{BT-Bianchi}, the superposition principle for the auto-B\"acklund transformation \eqref{BT-1} follows from 
$$ {\rm{B}}(\tilde{q}_{00},\hat{\tilde{p}}_{00};\varepsilon_2){\rm{B}}(q_{00},\tilde{p}_{00};\varepsilon_1) = {\rm{B}}(\hat{q}_{00},\tilde{\hat{p}}_{00};\varepsilon_1){\rm{B}}(q_{00},\hat{p}_{00};\varepsilon_2),$$
and can be written as
\begin{equation}\label{BT-1-Bianchi}
\hat{\tilde{p}}_{00} = -\,\frac{\tilde{p}_{00} - \hat{p}_{00}}{\varepsilon_1-\varepsilon_2 + (\tilde{p}_{00} - \hat{p}_{00}) q_{00}}, \quad \tilde{q}_{00} - \hat{q}_{00} = \left(\varepsilon_1-\varepsilon_2 + (\tilde{p}_{00} - \hat{p}_{00}) q_{00} \right) q_{00}. 
\end{equation}

\subsubsection*{Second case: $c_1\ne 0$ and $c_2 = 0$} \label{sec:case2}

If $c_2=0$ and $c_1 \ne 0$, we choose $c_1=1$, relations \eqref{eq:det-B-3} imply that $g^{(4)}_{00}= 1$, and the determinant of $\rm{B}$ implies that $g^{(1)}_{00}= p_{00} \tilde{q}_{00}+ \varepsilon$. In view of these choices, we arrive at a Darboux matrix which can be written in terms of matrix \eqref{Darboux-matrix-AY-1} as
\begin{equation}\label{Darboux-matrix-AY-2}
{\rm{C}}(p_{00},\tilde{q}_{00};\varepsilon) = (\lambda+ \varepsilon) \left( {\rm{B}}(\tilde{q}_{00},p_{00};\varepsilon)\right)^{-1},
\end{equation}
whereas relations \eqref{eq:det-B-3a} yield actually system \eqref{BT-1-inv} with the roles of new and old fields interchanged, i.e. system \eqref{BT-1-inv} accompanied by the interchange $(p,q) \leftrightarrow (\tilde{p},\tilde{q})$. Thus in this case we end up with the inverse of the Darboux and B\"acklund transformations we derived previously.

\subsection{Derivation of soliton solutions} \label{sec:solitons}

We employ the auto-B\"acklund transformation \eqref{BT-1} and its superposition principle \eqref{BT-1-Bianchi} in the derivation of soliton solutions of the Adler--Yamilov system \eqref{Adler-Yamilov}. 

\begin{figure}[ht]
	\includegraphics[scale=.8]{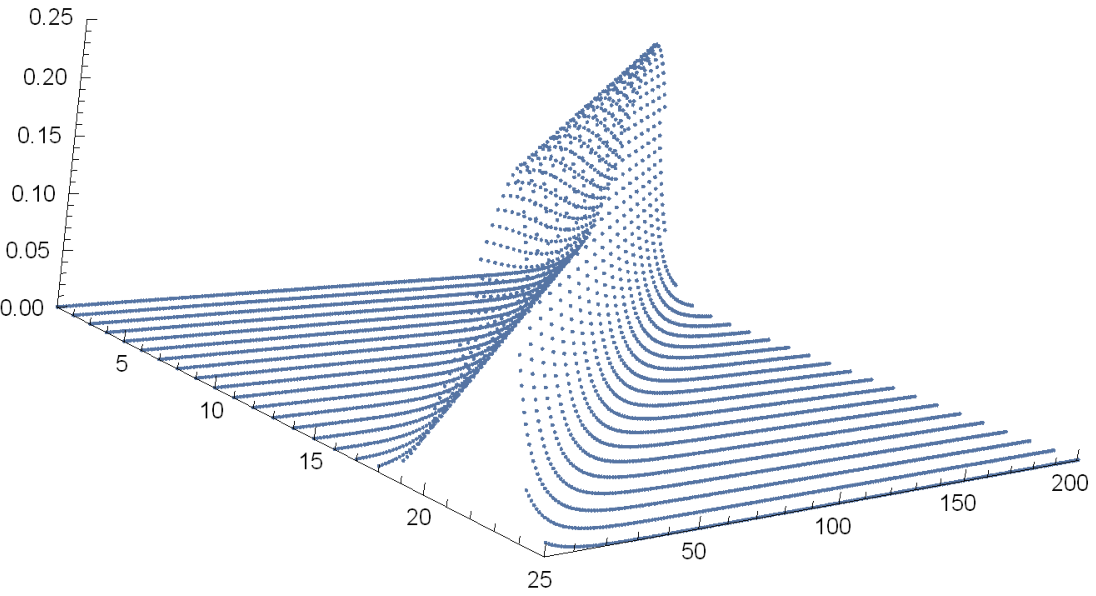} 	\includegraphics[scale=.8]{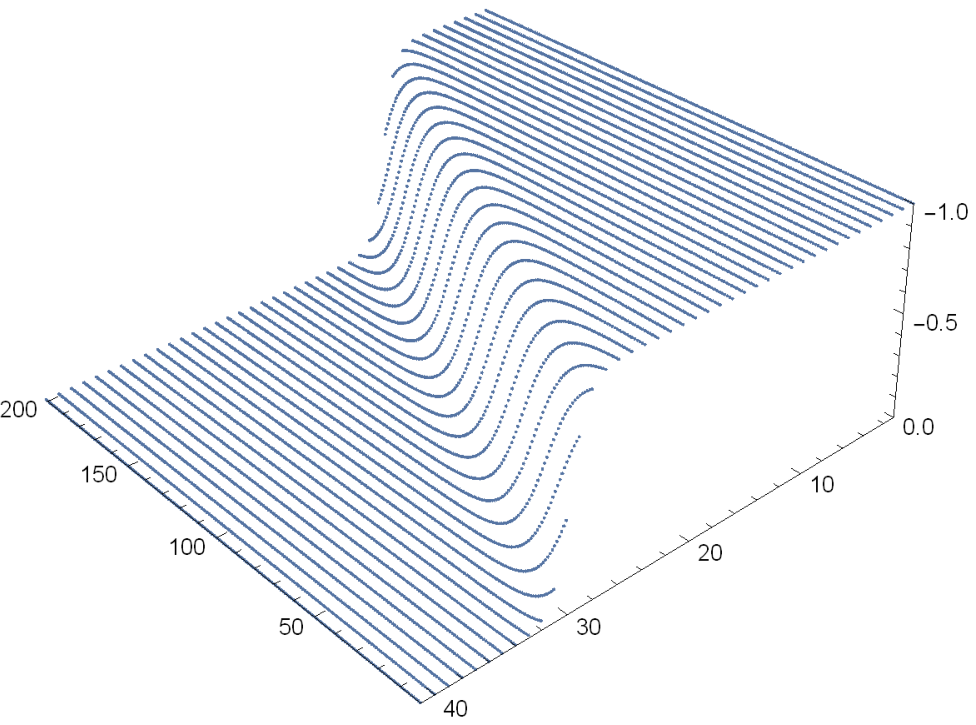}
	\caption{The one soliton solution of the Adler--Yamilov system and the potential $1/g_{00}$. In both cases $\al=8$, $\be=4$, $\varepsilon=1$ and $c= -2$. } \label{fig:kink}
\end{figure}

More precisely, we start with the solution\footnote{This solution can be constructed starting with the zero solution $p_{00}= q_{00}=0$ and using the second transformation we discussed in subsection \ref{sec:DL-AY} with $\varepsilon=0$.} 
\begin{equation} \label{eq:sol-0}
p_{00} = 0, \quad q_{00} = \al^{-n} \be^{-m}.
\end{equation}
With this seed solution, the first equation in \eqref{BT-1-n} and the first one in \eqref{BT-1-m} become
\begin{equation} \label{eq:sol-1}
\tilde{p}_{10}= \frac{\al-\varepsilon}{1+\tilde{p}_{00} \al^{-n-1} \be^{-m}}\tilde{p}_{00}, \quad
 \tilde{p}_{01}= \frac{\be-\varepsilon}{1+\tilde{p}_{00} \al^{-n} \be^{-m-1}}\tilde{p}_{00}.
\end{equation}
We can linearise these Ricatti equations by setting $\tilde{p}_{00} = \al^{n}\be^m/g_{00}$,
$$ (\al - \varepsilon) g_{10} = \al g_{00} + 1, \quad (\be -\varepsilon) g_{01} = \be g_{00} + 1.$$
The general solution of this linear system is
\begin{equation}\label{eq:sol-2}
g_{00} = \frac{\al^n}{(\al - \varepsilon)^n} \frac{\be^m}{(\be - \varepsilon)^m} c - \frac{1}{\varepsilon}, 
\end{equation}
where $c \in {\mathbb{R}}$ is the arbitrary constant of integration, and thus
\begin{subequations}\label{eq:sol-3}
\begin{equation}\label{eq:sol-3a}
\tilde{p}_{00} = \frac{\varepsilon\, \al^n \be^m (\al - \varepsilon)^n (\be-\varepsilon)^m}{c \,\varepsilon\, \al^n \be^m-(\al-\varepsilon)^n (\be-\varepsilon)^m}.
\end{equation}
Using the seed solution \eqref{eq:sol-0} and the updated potential \eqref{eq:sol-3a}, we can use either the equation for $\tilde{q}_{10}$ in \eqref{BT-1-n} or the equation for $\tilde{q}_{01}$ in \eqref{BT-1-m}  to determine $\tilde{q}_{00}$. Both ways lead to
\begin{equation}\label{eq:sol-3b}
\tilde{q}_{00} = \frac{c\, \varepsilon^2}{c \,\varepsilon\, \al^n \be^m-(\al-\varepsilon)^n (\be-\varepsilon)^m}.
\end{equation}
\end{subequations}
This two-parameter family of solutions yields the one-soliton solution of \eqref{Adler-Yamilov}: even though both functions in \eqref{eq:sol-3} diverge, their product represents a soliton. This interpretation is motivated by the relation of the Adler--Yamilov system to the nonlinear Schr\"odinger equation \cite{SPS}, and it is evident from the plot of the product $|\tilde{p}_{00} \tilde{q}_{00}|$. We also plot the potential $1/g_{00}$ which is a kink. See Figure \ref{fig:kink}.

\begin{figure}[ht]
	\includegraphics[scale=.9]{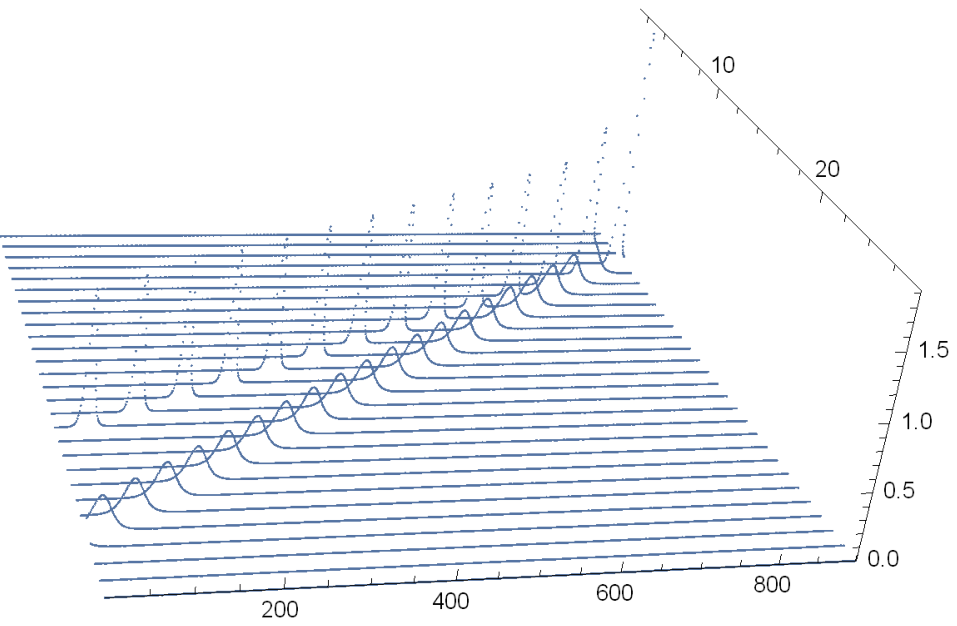} 	\includegraphics[scale=.8]{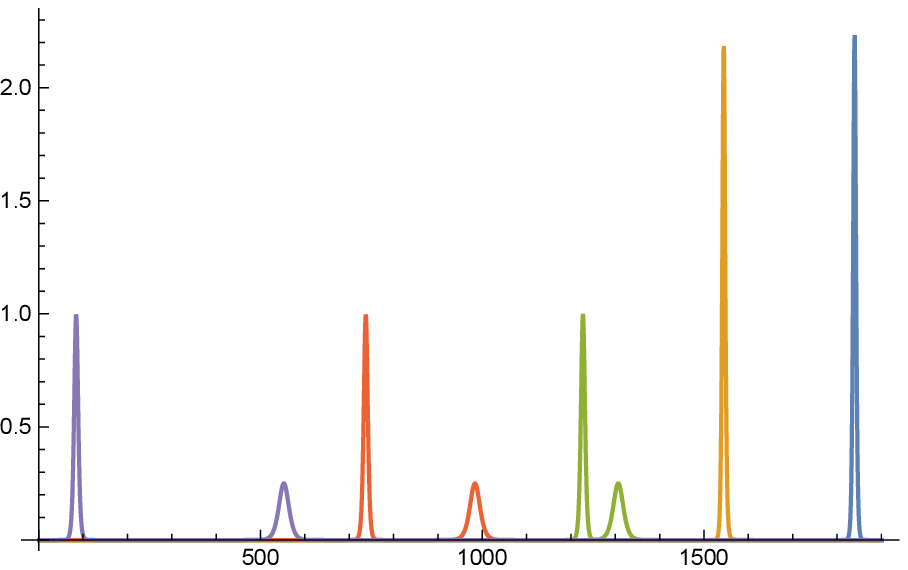}
	\caption{The two-soliton solution of the Adler--Yamilov system: solution \eqref{eq:sol-6} with $\al=8$, $\be=4$, $\varepsilon_1=1$, $c_1= -2$, $\varepsilon_2 = 3$ and $c_2=8$. It should be noted that this solution requires the combination of a non-singular solution, i.e. the pair $(\tilde{p},\tilde{q})$ which corresponds to the one-soliton solution  \eqref{eq:sol-3}, and a singular one which is  $(\hat{p},\hat{q})$. } \label{fig:2sol}
\end{figure}

Having constructed the two-parameter family of solutions \eqref{eq:sol-3}, we may use it along with the superposition principle \eqref{BT-1-Bianchi} to determine a third solution and in particular the two-soliton solution of system \eqref{Adler-Yamilov}. Starting with the same seed solution, the two solutions $(\tilde{p}_{00},\tilde{q}_{00})$ and $(\hat{p}_{00},\hat{q}_{00})$ involved in \eqref{BT-1-Bianchi} follow from \eqref{eq:sol-3} by replacing parameters $(\varepsilon,c)$ with $(\varepsilon_1,c_1)$ and $(\varepsilon_2,c_2)$, respectively. In order to make the presentation more comprehensible, let us introduce the  shorthand notation
$$\delta_0 := \al^n \be^m, \quad \delta_1:= (\al-\varepsilon_1)^n (\be-\varepsilon_1)^m, \quad \delta_2:=  (\al-\varepsilon_2)^n (\be-\varepsilon_2)^m.$$
In terms of this notation, the seed solution is $p_{00}=0$, $q_{00}= 1/\delta_0$, and the two solutions we described can be written as
\begin{equation}\label{eq:sol-5}
\tilde{p}_{00} = \frac{\varepsilon_1 \, \delta_0 \delta_1 }{c_1 \,\varepsilon_1\, \delta_0-\delta_1}, ~~ \tilde{q}_{00} = \frac{c_1\, \varepsilon_1^2}{c_1 \,\varepsilon_1\, \delta_0-\delta_1}, \quad \mbox{and} \quad \hat{p}_{00} = \frac{\varepsilon_2 \, \delta_0 \delta_2}{c_2 \,\varepsilon_2\, \delta_0-\delta_2}, ~~ \hat{q}_{00} = \frac{c_2\, \varepsilon_2^2}{c_2 \,\varepsilon_2\, \delta_0-\delta_2},
\end{equation}
respectively. With these formulae at our disposal, the first relation in \eqref{BT-1-Bianchi} yields 
\begin{subequations}\label{eq:sol-6}
\begin{equation}\label{eq:sol-6a}
\hat{\tilde{p}}_{00} = \frac{\varepsilon_1 \varepsilon_2 \delta_0 \left(c_1 \delta_2 - c_2 \delta_1 \right) + (\varepsilon_1-\varepsilon_2) \delta_1 \delta_2}{c_1c_2 \varepsilon_1\varepsilon_2 (\varepsilon_1-\varepsilon_2) \delta_0 - c_1 \varepsilon_1^2 \delta_2 + c_2 \varepsilon_2^2 \delta_1}.
\end{equation}
To find $\hat{\tilde{q}}_{00}$ we work in the same way we derived \eqref{eq:sol-3b} and according to the Bianchi diagram. More precisely, we can use the second equation either in \eqref{BT-1-n} or in \eqref{BT-1-m} with $\tilde{q}$ replaced by $\hat{\tilde{q}}$, $(p,q)$ replaced by $(\tilde{p},\tilde{q})$ given in \eqref{eq:sol-5}, and parameter $\varepsilon$ replaced by $\varepsilon_2$ (alternatively, we can replace $(p,q)$ with $(\hat{p},\hat{q})$ given in \eqref{eq:sol-5}, and parameter $\varepsilon$ with $\varepsilon_1$), to find
\begin{equation}\label{eq:sol-6b}
\hat{\tilde{q}}_{00} = \frac{c_1 c_2 \varepsilon_1^2 \varepsilon_2^2 (\varepsilon_1-\varepsilon_2)}{c_1c_2 \varepsilon_1\varepsilon_2 (\varepsilon_1-\varepsilon_2) \delta_0 - c_1 \varepsilon_1^2 \delta_2 + c_2 \varepsilon_2^2 \delta_1}.
\end{equation}
\end{subequations}
This four-parameter family of solutions yields the two-soliton solution of \eqref{Adler-Yamilov} in the same way we interpreted \eqref{eq:sol-3} as the one-soliton solution of the Adler--Yamilov system. See the plots of the product $|\hat{\tilde{p}}_{00} \hat{\tilde{q}}_{00}|$ in Figure \ref{fig:2sol}.

\section{Conclusions} \label{sec:con}
In this paper we proposed a new method for deriving B\"acklund transformations and constructing solutions for nonlinear integrable P$\Delta$Es which admit Lax representation but do not necessarily possess the 3D consistency property. Specifically, in our approach we consider Darboux transformations which leave the given Lax pair covariant, and by construction lead to B\"acklund transformations for the corresponding discrete system. The permutability of four Darboux matrices according to the Bianchi diagram in Figure \ref{fig:Bianchicomd} leads to the nonlinear superposition principle of the related B\"acklund transformation. Moreover, the latter transformation and its superposition principle can be used in the construction of interesting solutions to P$\Delta$Es starting from some simple ones. As an illustrative example we used the Adler--Yamilov system \eqref{Adler-Yamilov}. For this system we constructed Darboux and corresponding B\"acklund transformations. With the use of transformation \eqref{BT-1} and its superposition principle \eqref{BT-1-Bianchi} we constructed the one- and two-soliton solutions starting with the seed solution $p_{00} =0, q_{00}= \al^{-n} \be^{-m}$.

In the illustrative example we considered in Section \ref{sec:AY}, the Lax representation \eqref{Lax-rep} involves matrices $\rm{L}$ and $\rm{M}$ which have the same form, see \eqref{Lax-pair-AY}. The natural question arises as to whether our method can be employed to the case of integrable P$\Delta$Es with Lax representation \eqref{Lax-rep} where matrices ${\rm{L}}$ and $\rm{M}$ do not have the same form. The answer to this question is positive, the corresponding transformations may involve auxiliary functions (potentials), and this derivation is similar to the generic construction we presented in subsection \ref{sec:DL-AY}, see Darboux matrix \eqref{eq:det-B-5} and relations \eqref{eq:det-B-2b}, \eqref{eq:det-B-3}, \eqref{eq:det-B-3a}and \eqref{eq:det-B-4}. 

Moreover our considerations can be extended to the generalized symmetries of the discrete system. Generalized symmetries are (integrable) differential-difference equations involving shifts in one lattice direction and are compatible with the P$\Delta$E. Their Lax pair is semi-discrete and its discrete part coincides with the one of the two equations of the fully discrete Lax pair \eqref{linear-sys}, see for instance \cite{Fordy-Pavlos-3}. This relation allows us to extend the Darboux and B\"acklund transformations for the P$\Delta$E to corresponding ones for the differential-difference equations and employ the B\"acklund transformation and its superposition principle in the construction of solutions for the symmetries. We will demonstrate this method and its extensions in our future work in details using the Hirota KdV equation as an illustrative example, as well as systems appeared in \cite{BMX,Fordy-Pavlos,Sasha-Pavlos}.

In fact our results can be used and extended in various ways.
\begin{enumerate}
  \item Apply our method to construct solutions to all the NLS type equations derived in \cite{SPS}.
    
    In \cite{SPS} we classified Darboux transformations related to NLS type equations and constructed integrable discretisations of the latter, namely integrable systems of nonlinear P$\Delta$Es. By employing the discrete Darboux--Lax scheme we proposed in Section \ref{sec:discrete-DL}, one could derive B\"acklund transformations and construct soliton solutions to these nonlinear P$\Delta$Es.
    
    \item Study the solutions of the associated PDEs.
       
    The Adler--Yamilov system \eqref{Adler-Yamilov} constitutes a discretisation of the NLS equation via its Darboux transformation \cite{SPS}. In this paper, we constructed soliton solutions to this system, so one could consider the continuum limits of these solutions to construct solutions to the NLS equation. This procedure could be applied to other NLS type equations which appeared in \cite{SPS}.

    \item Study the corresponding Yang--Baxter maps.
    
    In \cite{Sokor-Sasha, Sokor-Pap} matrix refactorisation problems of Darboux matrices for integrable PDEs were considered in order to derive solutions to the Yang--Baxter equation and the entwining Yang--Baxter equation. Since the generator of Yang--Baxter maps is a matrix refactorisation problem \eqref{BT-Bianchi} it makes sense to understand how B\"acklund transformations for P$\Delta$Es are related to Yang--Baxter maps. In our future work, we plan to show that Yang--Baxter and entwining Yang--Baxter maps are superpositions of B\"acklund transformations of P$\Delta$Es.

    \item Extend the results to the case of discrete systems on a 3D lattice.
    
    One can extend the results employed in this paper to the case of 3D lattice integrable systems. It is expected that the superposition of B\"acklund transformations related to these systems are solutions to the tetrahedron equation.
    
    \item Extend the results to the case of Grassmann algebras.
    
    Grassmann extensions of Darboux transformations were employed in the construction of noncommutative versions of discrete systems, see for instance \cite{Georgi-Sasha, Xue-Levi-Liu} . However it was realised in \cite{Sokor-2020} that quad-graph systems may lose their 3D consistency property in the Grassmann extension. The method we presented here could be generalised and employed in the construction of B\"acklund transformations and the derivation of solutions to Grassmann extended quad-graph systems which appeared in the literature.

\end{enumerate}

\section{Acknowledgements}
 Xenia Fisenko's work was supported by the Ministry of Research and Higher Education (Regional Mathematical Centre ``Centre of Integrable Systems,'' Agreement No. 075-02-2021-1397). Sotiris Konstantinou-Rizos's work was funded by the Russian Science Foundation (grant No. 20-71-10110).

\end{document}